\documentclass[10pt,conference]{IEEEtran}

\IEEEoverridecommandlockouts

\usepackage{textcomp}
\usepackage{xcolor}
\usepackage{graphicx} 
\usepackage{cite}
\usepackage{graphicx}
\usepackage{amsmath}
\usepackage{amssymb}
\usepackage{amsfonts}
\interdisplaylinepenalty=2500
\usepackage{array}

\usepackage{color}
\usepackage{algorithm, algorithmic}
\usepackage{booktabs} 
\usepackage{multirow}
\usepackage{url}
\usepackage{amsthm} % Required for theorem-like environments
% Define the assumption environment

\newtheorem{proposition}{Proposition}
\newtheorem{theorem}{Theorem}

\usepackage{subcaption}
\captionsetup[figure]{font=footnotesize,labelsep=quad,singlelinecheck=false,skip=5pt}% ,skip=12pt
\captionsetup[subfigure]{font=footnotesize,labelformat=simple,singlelinecheck=true,skip=3pt}
\captionsetup[table]{labelsep=newline,labelfont=footnotesize,textfont={sc, footnotesize}}% ,belowskip=10pt

\usepackage{graphicx,bm,dsfont}

\newcommand{\todom}[1]{} % {\textcolor{blue}{TODO: #1}}

\def\BibTeX{{\rm B\kern-.05em{\sc i\kern-.025em b}\kern-.08em
    T\kern-.1667em\lower.7ex\hbox{E}\kern-.125emX}}

\begin{document}

\title{Distributed Activity Detection for Cell-Free Hybrid Near-Far Field Communications\\
}

\author{
    \IEEEauthorblockN{Jingreng Lei\IEEEauthorrefmark{1}\IEEEauthorrefmark{2}, Yang Li\IEEEauthorrefmark{2}\thanks{The work of Yang Li was supported in part by Guangdong Basic and Applied Basic Research Foundation under Grant 2025A1515011658, in part by the National Natural Science Foundation of China (NSFC) under Grant 62101349, and in part by Guangdong Research Team for Communication and Sensing Integrated with Intelligent Computing (Project No. 2024KCXTD047). The work of  Ya-Feng Liu was supported  by the NSFC under Grant 12371314.}, Zeyi Ren\IEEEauthorrefmark{1}, Qingfeng Lin\IEEEauthorrefmark{1}, Ziyue Wang\IEEEauthorrefmark{3}, Ya-Feng Liu\IEEEauthorrefmark{4}, and Yik-Chung Wu\IEEEauthorrefmark{1}}
    
    \IEEEauthorblockA{\IEEEauthorrefmark{1}Department of Electrical and Electronic Engineering, The University of Hong Kong, Hong Kong}

    \IEEEauthorblockA{\IEEEauthorrefmark{2}School of Computing and Information Technology,
Great Bay University, Dongguan, China}
    \IEEEauthorblockA{\IEEEauthorrefmark{3}LSEC, ICMSEC, AMSS, Chinese Academy of Sciences, Beijing, China}
    \IEEEauthorblockA{\IEEEauthorrefmark{4} School of Mathematical Sciences, Beijing University of Posts and Telecommunications, Beijing, China}
      \IEEEauthorblockA{Emails: \{leijr, renzeyi, qflin, ycwu\}@eee.hku.hk, liyang@gbu.edu.cn, ziyuewang@lsec.cc.ac.cn, yafengliu@bupt.edu.cn}
    
}

\maketitle

\begin{abstract} 
A great amount of endeavor has recently been devoted to activity detection for massive machine-type communications in cell-free massive MIMO. However, in practice, as the number of antennas at the  access points (APs) increases, the Rayleigh distance that separates the near-field and far-field regions also expands, rendering the conventional assumption of far-field propagation alone impractical. To address this challenge, this paper considers a hybrid near-far field activity detection in cell-free massive MIMO, and establishes a  covariance-based formulation, which facilitates the development of a distributed algorithm to alleviate the computational burden at the  central processing unit (CPU). Specifically,  each AP performs local activity detection for the
devices and then transmits the detection result to the CPU for
further processing. In particular, a novel coordinate descent algorithm based on the Sherman-Morrison-Woodbury update with Taylor expansion is proposed to handle the local
detection problem at each AP. Moreover, we theoretically analyze how the hybrid near-far field channels affect the detection performance. Simulation results validate the theoretical analysis and demonstrate the superior performance of the proposed  approach compared with existing approaches.
\end{abstract}

\begin{IEEEkeywords}
Cell-free massive multiple-input multiple-output, distributed activity detection, grant-free random access, hybrid near-far field communications, machine-type communications.
\end{IEEEkeywords}

\section{Introduction}
With the rapid development of the Internet of Things (IoT), massive machine-type communications (mMTC) are expected to play a crucial role for the sixth-generation (6G) vision of ubiquitous connectivity\cite{Yafeng}. To meet the stringent low-latency requirements, grant-free random access has been demonstrated as a promising solution compared to grant-based schemes\cite{shahab2020grant}.

Activity detection is a crucial task during the grant-free random access phase. Mathematically, current studies on activity detection can be broadly divided into two lines. In the first line of research, by exploiting the sporadic nature of mMTC, the compressed sensing (CS)-based method is proposed to solve the joint activity detection and channel estimation problem~\cite{yangge3,gao2023compressive}. Another line of research, known as the covariance-based approach, identifies active devices by leveraging the statistical properties of channels based on the covariance matrix without estimating the channels\cite{li2022asynchronous,lin2024intelligent}. Both theoretically and empirically, it has been demonstrated that the covariance-based approach generally outperforms the \mbox{CS-based} approach\cite{chen2021phase,yangge}.

Recently, cell-free massive multiple-input multiple-output (MIMO) has emerged as a key enabling technology to achieve the 6G vision of ubiquitous connectivity, where all access points (APs) are connected to a central processing unit (CPU) via fronthaul links for joint signal processing\cite{lin2024communication}. This architecture eliminates traditional cell boundaries, thereby overcoming inter-cell interference. Meanwhile, by deploying the large-scale antenna arrays at the APs, cell-free massive MIMO can achieve exceptionally high system capacity and enhanced spatial resolution\cite{ganesan2021clustering}. 

On the other hand, as the array aperture enlarges, the boundary between the near-field and the far-field regions, characterized by the Rayleigh distance also expands~\cite{lu2024tutorial,cuilaoshi,wangzhe}. Consequently, devices may be randomly distributed in either the near-field or the far-field region of each AP. Due to the fundamental difference of the electromagnetic wave propagation between the near-field and the far-field channels, the conventional assumption of the far-field propagation alone becomes impractical, necessitating the consideration of the hybrid near-far field communications. Furthermore, due to the hybrid near-far field channels, the received signals at each AP exhibit  spatial correlations. Consequently, the rank-one update usually employed in the covariance-based method is not applicable anymore, rendering the algorithm design for activity detection mathematically intractable.

To overcome the above challenges, this paper proposes a practical hybrid near-far field system model for activity detection in cell-free massive MIMO. We derive the probability density function (PDF) of the received signals and establish a covariance-based formulation.  Furthermore, to reduce the computational burden at the CPU,  a distributed covariance-based activity detection algorithm is proposed to balance the computations across the network.  Specifically, each AP performs local activity detection for the devices and then transmits the detection result to the CPU for further
processing. In particular, a novel coordinate descent (CD) algorithm based on the Sherman-Morrison-Woodbury update with Taylor expansion is proposed to efficiently handle the local detection problem at each AP.

Moreover, we theoretically reveal that the hybrid near-far field channels can improve the detection performance compared with the conventional far-field channels. Simulation results demonstrate that the proposed approach achieves better detection performance compared to existing methods, and also corroborates the theoretical analysis.

\section{System Model and Problem Formulation}
\subsection{System Model}

As shown in~Fig~\ref{fig:Hybrid}, consider an uplink cell-free massive MIMO system with \( M \) APs and \( N \) IoT devices. Each AP is equipped with a $K$-antenna uniform linear array with a half-wavelength space, and each IoT device utilizes a single antenna. All \( M \) APs communicate with a CPU through fronthaul links. The channels between the APs and the devices are assumed to undergo quasi-static block fading, which  remains unchanged within each coherence block, but may vary across different blocks. Due to the boundary of the Rayleigh distance $2D^{2}/\lambda_\text{c}$, where $D$ is the aperture of the antenna array, and $\lambda_\text{c}$ is the wavelength of the central carrier, some devices are located in the near-field region of an AP if they are within its  Rayleigh distance, and the remaining devices are located in its far-field region. For each AP $m$, let $\mathcal{U}_m$ denote the subset containing  $N_{m,\text{near}}$  devices in the near-field region, and $\mathcal{U}^{\text{c}}_m$ denote the complement set containing the remaining $N - N_{m,\text{near}}$ devices in the far-field region. If  device $n$ is located in the far-field region of  AP $m$,  the uplink channel  can be modeled as \cite{ganesan2021clustering}:
\begin{equation}
\label{far-field chanel}
   \mathbf{h}_{m,n}\sim\mathcal{CN}(\mathbf{0},g_{m,n}\mathbf{I}_{K}),~\forall~n \in \mathcal{U}^{\text{c}}_m,
\end{equation}
 where $g_{m,n}$ is the large-scale fading coefficient. If  device $n$ is located in the near-field region of AP $m$, its uplink channel consists of a deterministic line-of-sight (LoS) channel and a statistical multi-path non-line-of-sight (NLoS) channel induced by $L_{m}$ scatters\cite{wang2024beamfocusing}. Then, the near-field channel can be modeled as~\cite[Eq.~(132)]{liuNearFieldCommunicationsTutorial2023}:
 \begin{align}
    \mathbf{h}_{m,n} \!\!=\! \underbrace{\beta_{m,n} \mathbf{b}_{m,n}(\mathbf{r}_{m,n})}_{\text{LoS}} \!+\!\! \sum_{\ell=1}^{L_{m}}\underbrace{ \varphi_{m,\ell}\tilde{\beta}_{m,n,\ell} \mathbf{b}_{m,n}(\tilde{\mathbf{r}}_{m,\ell})}_{\text{NLoS}}, \forall~n\in\mathcal{U}_m,
 \end{align}
\addtolength{\topmargin}{0.01in}
 where $\beta_{m,n}$ and $\tilde{\beta}_{m,n,\ell}$ denote the LoS and the NLoS channel gains, respectively,  $\varphi_{m,l}\sim\mathcal{CN}(0,\sigma_{m,l}^2)$ is the reflection coefficient of the $l$-th scatterer of AP $m$ with variance $\sigma_{m,l}^2$, $[\mathbf{r}_{m,n}]_k$ is the distance between device $n$ and the $k$-th antenna of  AP $m$, and $[\tilde{\mathbf{r}}_{m,\ell}]_k$ denotes the  distance between scatter $\ell$ and the $k$-th antenna of AP $m$. Furthermore, $\mathbf{b}_{m,n}(\cdot)$ denotes the near-field array response. For instance, for the LoS channel, the array response is given by
 \begin{align}
    \mathbf{b}_{m,n}(\mathbf{r}_{m,n}) =
    \big[ & e^{-j\frac{2\pi}{\lambda}([\mathbf{r}_{m,n}]_1- r_{m,n,0})}, 
          e^{-j\frac{2\pi}{\lambda}([\mathbf{r}_{m,n}]_2- r_{m,n,0})}, \notag \\
          & \dots, e^{-j\frac{2\pi}{\lambda}([\mathbf{r}_{m,n}]_K- r_{m,n,0})} \big]^{\text{T}},
\end{align}

\noindent where $r_{m,n,0}$ is the distance between device $n$ and the central element of AP $m$. Then, we can calculate the covariance matrix of $\mathbf{h}_{m,n}$ as
\begin{align}
    \mathbf{R}_{m,n} = \sum_{\ell=1}^{L_{m}}\sigma_{m,\ell}^2 |\tilde{\beta}_{m,n,\ell}|^2 \mathbf{b}_{m,n}(\tilde{\mathbf{r}}_{m,\ell})\mathbf{b}_{m,n}^{\text{H}}(\tilde{\mathbf{r}}_{m,\ell}),~\forall~n\in\mathcal{U}_m.
\end{align}
On this basis, the near-field channel $\mathbf{h}_{m,n}$ can be modeled as
\begin{align}
    \label{near-field channel}
    \mathbf{h}_{m,n} \sim \mathcal{CN}(\beta_{m,n} \mathbf{b}_{m,n}(\mathbf{r}_{m,n}), \mathbf{R}_{m,n}),~\forall~n\in\mathcal{U}_m.
\end{align}

\begin{figure}[t!]
    \centering
    \includegraphics[width=0.7\linewidth]{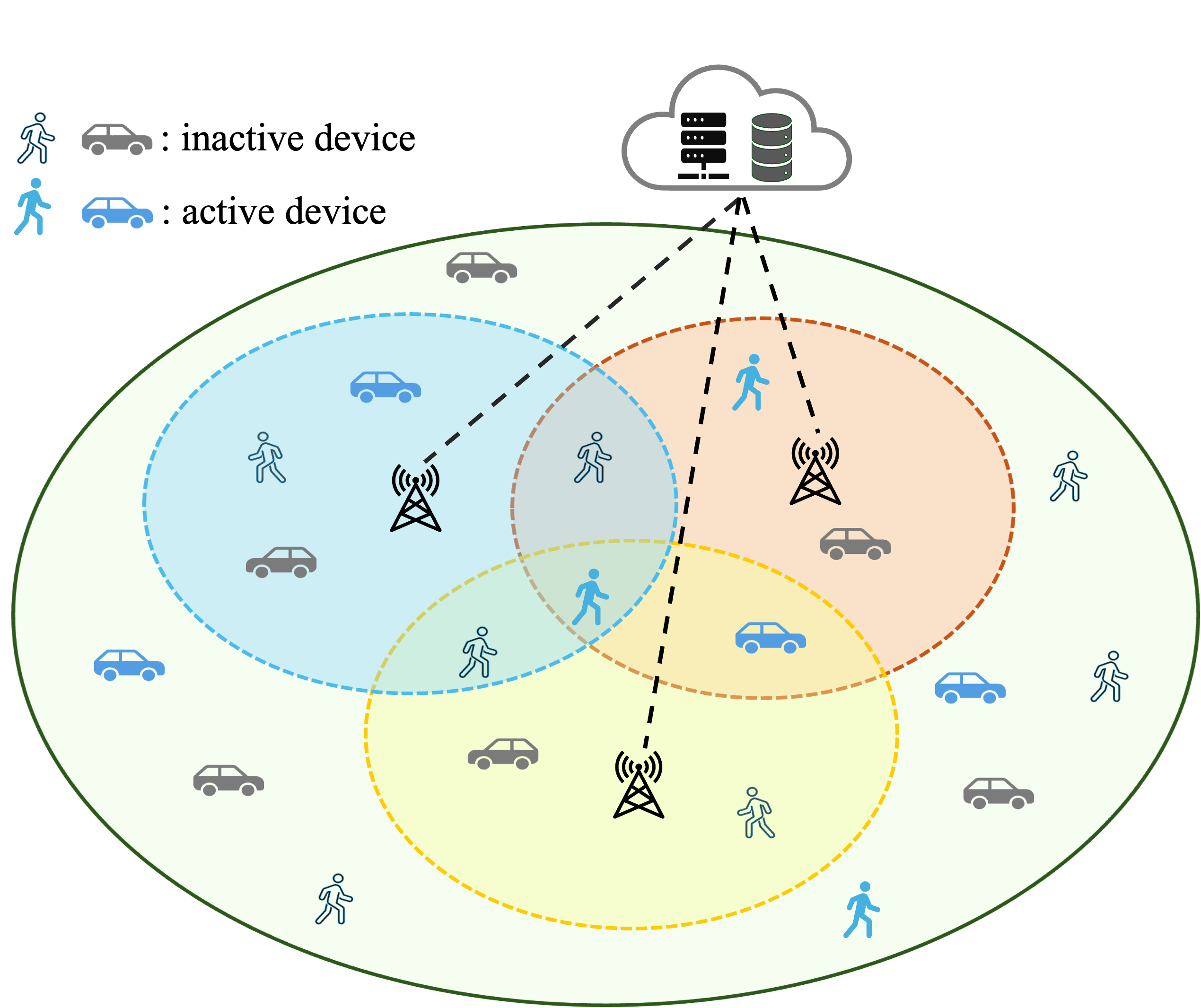}
    \caption{Hybrid near-far field activity detection in cell-free massive MIMO.}
    \label{fig:Hybrid}
\end{figure}

To detect the device activity, each device $n$ is pre-assigned a distinct signature sequence $\mathbf{s}_n \in \mathbb{C}^L,$  where $L$ represents the length of the signature sequence.  Given the sporadic nature of mMTC, only a small fraction of the  $N$  devices remain active within each coherence block. Let $a_n\in \{0,1\}$ denote the binary variable for device $n$ ($a_n= 1$ denotes the active status).  Then, the received signal at AP \( m \) can be written as
\begin{align}
\label{receive signal}
\mathbf{Y}_m & = \sum_{n \in \mathcal{U}_m} a_n \mathbf{s}_n \mathbf{h}_{m,n}^\text{T} + \sum_{n \in \mathcal{U}_m^{\text{c}}} a_n \mathbf{s}_n \mathbf{h}_{m,n}^\text{T} + \mathbf{W}_m,
\end{align}

\noindent where  \( \mathbf{W}_m\in \mathbb{C}^{L\times K}\) is the additive independent and identically distributed (i.i.d.) Gaussian noise at AP \( m \) with each entry following $\mathcal{CN}(0,\varsigma^2_m)$ and $\varsigma^2_m$ denoting the noise variance.

\subsection{Problem Formulation}
Mathematically, the problem of activity detection is equivalent to determining each $a_{n}\in \{0, 1\}$.
Notice that the channels  and the noise  are all complex Gaussian random variables, while $\{\mathbf{s}_n\}_{n=1}^{N}$ and $\{{a_n}\}_{n=1}^{N}$ can be viewed as deterministic. Consequently, the received signal $\mathbf{Y}_m$ in (\ref{receive signal}) follows a complex Gaussian distribution as well. Given that the columns of $\mathbf{Y}_m$  are  not i.i.d. due to the near-field correlated channels, we analyze the joint distribution of all columns collectively. Thus, we consider its vectorized form $\mathbf{y}_m = \text{vec}(\mathbf{Y}_m)$, which can be written as
\begin{align}
\mathbf{y}_m &= \sum_{n \in \mathcal{U}_m} a_n \mathbf{h}_{m,n} \otimes \mathbf{s}_n +\sum_{n \in \mathcal{U}_m^\text{c}} a_n \mathbf{h}_{m,n} \otimes \mathbf{s}_n + \mathbf{w}_m,
\end{align}

\noindent where $\otimes$ denotes the Kronecker product, and \( \mathbf{w}_m \) is the vectorized noise. Thus, we have $\mathbf{y}_m\sim\mathcal{CN}(\bar{\mathbf{y}}_{m},\mathbf{C}_{m})$, where the  mean and the covariance matrix are calculated based on (\ref{far-field chanel}) and (\ref{near-field channel}), which are given by
\begin{align}
\label{expectation and covariance}
    \bar{\mathbf{y}}_{m} &=  \sum_{n \in \mathcal{U}_m} a_n \left(\beta_{m,n} \mathbf{b}_{m,n}(\mathbf{r}_{m,n})\right) \otimes \mathbf{s}_n, \notag \\
    \mathbf{C}_{m} &= \sum_{n \in \mathcal{U}_m} a_n \mathbf{R}_{m,n} \otimes (\mathbf{s}_n \mathbf{s}_n^\text{H}) + \sum_{n \in \mathcal{U}_m^\text{c}} a_n g_{m,n} \mathbf{I}_K \otimes (\mathbf{s}_n \mathbf{s}_n^\text{H}) \notag \\
    &\quad\quad + \varsigma^2_m\mathbf{I}_{LK}.
\end{align}

\noindent  The joint PDF of received signals \( \{\mathbf{y}_m\}_{m=1}^M \)  is given by
\begin{align}
&p( \{\mathbf{y}_m\}_{m=1}^M ; \mathbf{a})\notag\\
 = &\prod_{m=1}^M \frac{1}{\left|\pi \mathbf{C}_{m}\right |} \exp\left(-(\mathbf{y}_m - \bar{\mathbf{y}}_{m})^\text{H} \mathbf{C}_{m}^{-1} (\mathbf{y}_m - \bar{\mathbf{y}}_{m})\right),
\end{align}
\noindent where $\mathbf{a} \triangleq [a_1, a_2, \ldots, a_N]^\text{T}$ denotes the activity status vector. Our goal is to minimize the negative log-likelihood function \( -\log p( \{\mathbf{y}_m\}_{m=1}^M ; \mathbf{a}) \), which is formulated as
\begin{align}
 \label{eq:optimization_problem}
\underset{\mathbf{a}\in [0, 1]^{N}}{\text{min}}\sum_{m=1}^M \left\{\log|\mathbf{C}_{m}| \!+ \!(\mathbf{y}_m \!-\!\bar{\mathbf{y}}_{m})^\text{H}\mathbf{C}_{m}^{-1} (\mathbf{y}_m \!- \!\bar{\mathbf{y}}_{m})\right\}.
\end{align}
Once problem (\ref{eq:optimization_problem}) is solved, the device activity is detected as \(\hat{a}_{n} = \mathbb{I}(a_{n} \geq \gamma)\), where $\mathbb{I}(\cdot)$ is an indicator function, and \(\gamma\) serves as a threshold within the range \([0, 1]\) to balance the probability of missed detection (PM) and the probability of false alarm (PF)~\cite{ganesan2021clustering}.

\section{Distributed  activity detection}
In this section, we develop a distributed algorithm to tackle problem (\ref{eq:optimization_problem}). Unlike the centralized method where the CPU handles all computations, the proposed distributed algorithm operates on both the APs and the CPU. Each AP conducts their local detection and then transmits the detection result to the CPU. Then, the CPU  aggregates these results and distributes the combined information back to all APs to prepare for the subsequent iteration.

To enable distributed processing, we first reformulate problem (\ref{eq:optimization_problem}) as an equivalent consensus form:
\begin{subequations}
\label{distributed formulation}
\begin{align}
&\min_{\left \{{\boldsymbol{\theta}_{m}}\right \}_{m=1}^{M}, \mathbf {a}\in \left [{0,1}\right]^{N}}\, 
\sum _{m=1}^{M}f_{m}\left ({\boldsymbol{\theta}_{m}}\right), 
\\&\qquad ~\,\text {s.t.}~\boldsymbol{\theta}_{m}=\mathbf {a},\quad \forall~m=1,2,\ldots,M,
\end{align}    
\end{subequations}
where  $f_{m}\left ({\boldsymbol{\theta}_{m}}\right)= \log|\tilde{\mathbf{C}}_{m}| + (\mathbf{y}_m - \tilde{\bar{\mathbf{y}}}_{m})^\text{H} \tilde{\mathbf{C}}_{m}^{-1} (\mathbf{y}_m - \tilde{\bar{\mathbf{y}}}_{m})$, and $\tilde{\mathbf{C}}_{m}$ and $ \tilde{\bar{\mathbf{y}}}_{m}$ follow the same structure as $\mathbf{C}_{m}$ and $\bar{\mathbf{y}}_{m}$ but with $\boldsymbol{\theta}_{m}$ replacing \(\mathbf{a}\) in (\ref{expectation and covariance}). Note that $f_m(\cdot)$ only depends on the local calculation at the $m$-th AP, making it suitable for distributed optimization.

To solve problem (\ref{distributed formulation}) in a distributed manner, we construct its augmented Lagrangian function:
\begin{align}
\label{lagrangian}
&\mathcal{L} \left({\left \{{\boldsymbol{\theta}_{m}}\right \}_{m=1}^{M}, \mathbf {a}; \left \{{\boldsymbol {\lambda }_{m}}\right \}_{m=1}^{M} }\right) \nonumber \\
=&\sum \limits _{m=1}^{M} \left\{f_{m}\left ({\boldsymbol{\theta}_{m}}\right)+\boldsymbol {\lambda }_{m}^{\mathrm {T}}\left ({\boldsymbol{\theta}_{m}-\mathbf {a}}\right) +\frac {\mu }{2}\left \Vert{ \boldsymbol{\theta}_{m}-\mathbf {a}}\right \Vert_{2}^{2}\right\},
\end{align}
where $\boldsymbol{\lambda}_m \in \mathbb{R}^{N}$ is the dual variable with the consensus constraint $\boldsymbol{\theta}_{m} = \mathbf{a}$, and $\mu > 0$ is a penalty parameter. We adopt the alternating direction method of multipliers framework by alternately updating $\mathbf{a}$, $\{\boldsymbol{\theta}_{m}\}_{m=1}^M$, and $\{\boldsymbol{\lambda}_m\}_{m=1}^M$.

\subsubsection{Subproblem with respect to $\left \{{\boldsymbol{\theta}_{m}}\right \}_{m=1}^{M}$} At the $i$-th iteration, we decompose the problem with respect to $\left \{{\boldsymbol{\theta}_{m}}\right \}_{m=1}^{M}$ into $M$ parallel \mbox{subproblems}, with each  processed by the corresponding AP and given by
\begin{align}
\label{sub b}
&\min_{\boldsymbol{\theta}_{m}}f_{m}\left (\!{\boldsymbol{\theta}_{m}}\!\right)\!+\!\left ({\boldsymbol {\lambda }_{m}^{(i-1)}}\right)^{\mathrm {T}} \left ({\boldsymbol{\theta}_{m}\!-\!\mathbf {a}^{(i-1)}}\right)\! +\!\frac {\mu }{2}\left \Vert{ \boldsymbol{\theta}_{m}\!-\!\mathbf {a}^{(i-1)}}\right \Vert_{2}^{2}. 
\end{align}
This is a single-cell  activity detection
problem with additional
linear and quadratic terms. Compared with the traditional far-field scenario, the dimension of the covariance matrix $\tilde{\mathbf{C}}_{m}$ in problem (\ref{sub b}) is \(LK \times LK\), which is much larger than that in far-field scenario, i.e., \(L \times L\) \cite{chen2021phase,lin2024communication,ganesan2021clustering}. Therefore, the complexity of updating and storing matrix $\tilde{\mathbf{C}}_{m}$ is much higher than that in far-field scenario. Moreover, due to existence of channel correlation, the rank-one update in existing covariance-based approach \cite{ganesan2021clustering,li2022asynchronous,wang2024scalinglaw} is not applicable anymore, making problem (\ref{sub b}) challenging to solve.

To tackle the above challenges, we propose a novel CD algorithm based
on the Sherman-Morrison-Woodbury update with Taylor expansion.  For notational simplicity, we first introduce several unified symbols. Let 
\begin{align}
\bar{\mathbf{h}}_{m,n} &= 
\begin{cases} 
\beta_{m,n} \mathbf{b}_{m,n}(\mathbf{r}_{m,n}), & \text{if } n \in \mathcal{U}_m, \\
\mathbf{0}, & \text{if } n \in \mathcal{U}_m^{\text{c}},
\end{cases} \\
\mathbf{\Xi}_{m,n} &=
\begin{cases} 
\mathbf{R}_{m,n}, & \text{if } n \in \mathcal{U}_m, \\
g_{m,n}\mathbf{I}_K, & \text{if } n \in \mathcal{U}_m^{\text{c}}.
\end{cases}
\end{align}
Then, we can define a matrix $\mathbf{X}_{m,n}=\mathbf{\Xi}_{m,n}^{\frac{1}{2}} \otimes \mathbf{s}_n$ so that
\begin{align}
\label{defineX_mn}
 \mathbf{X}_{m,n} \mathbf{X}_{m,n}^\text{H} = \mathbf{\Xi}_{m,n} \otimes \left( \mathbf{s}_n \mathbf{s}_n^\text{H} \right).
\end{align}
 Define \(\mathbf{e}_n \in \mathbb{R}^N\) as the standard basis vector whose \(n\)-th entry is 1 while all other entries are 0.  For the $n$-th coordinate of $\boldsymbol{\theta}_m$, the update of $\theta_{m,n}$ can be expressed as $\theta_{m,n}+d$ with $\{\theta_{m,n'}\}_{n'=1,n'\neq n}^N$ fixed, where $d$ is determined by solving the following one-dimensional subproblem:
\begin{align}
\label{origin_update_distributed}
  & \underset{d \in [-\theta_{m,n},\,1-\theta_{m,n}]}{\operatorname{min}} \log \left| \tilde{\mathbf{C}}_{m} + d \mathbf{X}_{m,n} \mathbf{X}_{m,n}^{\text{H}} \right|  \notag\\
    &\quad\quad+ \left( \mathbf{y}_m - \tilde{\bar{\mathbf{y}}}_{m} - d\bar{\mathbf{h}}_{m,n} \otimes \mathbf{s}_n \right)^{\text{H}} \left( \tilde{\mathbf{C}}_{m} + d \mathbf{X}_{m,n} \mathbf{X}_{m,n}^{\text{H}} \right)^{-1}\notag\\
    &\quad\quad\times  \left( \mathbf{y}_m - \tilde{\bar{\mathbf{y}}}_{m} - d\bar{\mathbf{h}}_{m,n} \otimes \mathbf{s}_n \right) \notag\\
    &\quad\quad+ (\boldsymbol{\lambda}_m^{(i-1)})^{\text{T}}(\boldsymbol{\theta}_{m} \!\!+\! d\mathbf{e}_n \!\!-\! \mathbf{a}^{(i-1)})
   \! +\! \frac{\mu}{2}\|\boldsymbol{\theta}_{m}\! + \!d\mathbf{e}_n \!\!- \!\mathbf{a}^{(i-1)}\|_2^2. \notag\\
\end{align}

\noindent According to the property of the determinant, we have
\begin{align}\label{eq:term1}
	& \log\left| \tilde{\mathbf{C}}_{m} + d \, \mathbf{X}_{m,n} \mathbf{X}_{m,n} ^\text{H}\right| \nonumber \\
	= &\log \left| \mathbf{I}_{LK} + d\,\mathbf{X}_{m,n} \mathbf{X}_{m,n} ^\text{H} \tilde{\mathbf{C}}_{m}^{-1} \right| + \log \left| \tilde{\mathbf{C}}_{m}\right| \nonumber \\
	= &\log \left| \mathbf{I}_{J_{m,n}} + d\, \mathbf{X}_{m,n} ^\text{H} \tilde{\mathbf{C}}_{m}^{-1} \mathbf{X}_{m,n} \right| + \log \left| \tilde{\mathbf{C}}_{m}\right|\nonumber\\
    \overset{(a)}{\approx} &d\operatorname{tr} \left( \mathbf{X}_{m,n} ^\text{H} \tilde{\mathbf{C}}_{m}^{-1} \mathbf{X}_{m,n} \right) + \log \left| \tilde{\mathbf{C}}_{m}\right|,
\end{align}
where $J_{m,n}= \operatorname{rank}(\mathbf{\Xi}_{m,n})$, and $(a)$ is derived from the first-order Taylor expansion to avoid the computational complexity in calculating the determinant. On the other hand, based on the Sherman-Morrison-Woodbury formula, we have
\begin{align}\label{eq:term2}
	&\left(\tilde{\mathbf{C}}_{m} + d \mathbf{X}_{m,n} \mathbf{X}_{m,n}^{\text{H}} \right)^{-1} \notag\\
    =&\,\,\,\tilde{\mathbf{C}}_{m}^{-1}- d\, \tilde{\mathbf{C}}_{m}^{-1} \mathbf{X}_{m,n}\notag\\
    &\quad\quad\,\,\,\times\left( \mathbf{I}_{J_{m,n}} + d\, \mathbf{X}_{m,n}^{\text{H}} \tilde{\mathbf{C}}_{m}^{-1} \mathbf{X}_{m,n} \right)^{-1} \mathbf{X}_{m,n}^{\text{H}} \tilde{\mathbf{C}}_{m}^{-1}\notag\\
	\overset{(b)}{\approx}&\,\,\,\tilde{\mathbf{C}}_{m}^{-1}- d\, \tilde{\mathbf{C}}_{m}^{-1} \mathbf{X}_{m,n} \left( \mathbf{I}_{J_{m,n}} - d\, \mathbf{X}_{m,n}^{\text{H}} \tilde{\mathbf{C}}_{m}^{-1} \mathbf{X}_{m,n} \right) \notag\\
    &\quad\quad\,\,\,\times\mathbf{X}_{m,n}^{\text{H}} \tilde{\mathbf{C}}_{m}^{-1},
		\end{align}
where $(b)$ also applies the first-order Taylor expansion to avoid the computational complexity in repeatedly calculating the matrix inverse. Substituting the right-hand sides of (\ref{eq:term1}) and (\ref{eq:term2}) into (\ref{origin_update_distributed}), we obtain the CD update as shown in (\ref{eq:distributed inexact}) at the top of the next page, where $\omega \ge 0$ is a properly selected parameter to control the convergence of the algorithm, and $\lambda_{m,n}^{(i-1)}$ denotes the $n$-th entry of $\boldsymbol {\lambda }_{m}^{(i-1)}$. Setting the gradient of  (\ref{eq:distributed inexact}) to zero, the roots can be calculated by the cubic formula with the complexity of $\mathcal{O}(1)$~\cite{wang2024covariance}. The optimal solution of problem (\ref{eq:distributed inexact})  can be obtained by selecting the one with the smallest objective value among the roots. After sequentially updating $\theta_{m,n}$ for all $n$, problem (\ref{eq:distributed inexact}) can be efficiently solved to obtain $\boldsymbol{\theta}^{(i)}_m$ at each AP $m$.

\begin{figure*}
	\vspace{-6pt}
\begin{align}\label{eq:distributed inexact}
&\underset{d \in [-\theta_{m,n},\,1-\theta_{m,n}]}{\operatorname{min}}  d \rho_{1}(\boldsymbol{\theta}_{m}) + d^2 \rho_{2}(\boldsymbol{\theta}_{m}) + d^3 \rho_{3}(\boldsymbol{\theta}_{m}) + d^4 \rho_{4}(\boldsymbol{\theta}_{m}) + \frac{\omega}{2} d^2\nonumber\\
&\quad\quad\quad\rho_{1}(\boldsymbol{\theta}_{m}) = \text{tr} \left( \mathbf{X}_{m,n}^\text{H} \tilde{\mathbf{C}}_{m}^{-1} \mathbf{X}_{m,n} \right) - 2 \, \text{Re} \left( \left( \mathbf{y}_{m} - \tilde{\bar{\mathbf{y}}}_{m} \right)^\text{H} \tilde{\mathbf{C}}_{m}^{-1} \left( \bar{\mathbf{h}}_{m,n} \otimes \mathbf{s}_n \right) \right) \nonumber\\
&\quad\quad\quad\quad\quad\quad\quad - \left( \mathbf{y}_{m} - \tilde{\bar{\mathbf{y}}}_{m} \right)^\text{H} \tilde{\mathbf{C}}_{m}^{-1} \mathbf{X}_{m,n} \mathbf{X}_{m,n}^\text{H} \tilde{\mathbf{C}}_{m}^{-1} \left( \mathbf{y}_{m} - \tilde{\bar{\mathbf{y}}}_{m} \right) + \lambda_{m, n}^{(i-1)} + \mu \left(\theta_{m,n}-a_{n}^{(i-1)}\right), \nonumber\\
&\quad\quad\quad\rho_{2}(\boldsymbol{\theta}_{m}) = \left( \bar{\mathbf{h}}_{m,n} \otimes \mathbf{s}_n \right)^\text{H} \tilde{\mathbf{C}}_{m}^{-1} \left( \bar{\mathbf{h}}_{m,n} \otimes \mathbf{s}_n \right) + 2 \, \text{Re} \left( \left( \mathbf{y}_{m} - \tilde{\bar{\mathbf{y}}}_{m} \right)^\text{H} \tilde{\mathbf{C}}_{m}^{-1} \mathbf{X}_{m,n} \mathbf{X}_{m,n}^\text{H} \tilde{\mathbf{C}}_{m}^{-1} \left( \bar{\mathbf{h}}_{m,n} \otimes \mathbf{s}_n \right) \right) \nonumber\\
&\quad\quad\quad\quad\quad\quad\quad + \left( \mathbf{y}_{m} - \tilde{\bar{\mathbf{y}}}_{m} \right)^\text{H} \tilde{\mathbf{C}}_{m}^{-1} \mathbf{X}_{m,n} \mathbf{X}_{m,n}^\text{H} \tilde{\mathbf{C}}_{m}^{-1} \mathbf{X}_{m,n} \mathbf{X}_{m,n}^\text{H} \tilde{\mathbf{C}}_{m}^{-1} \left( \mathbf{y}_{m} - \tilde{\bar{\mathbf{y}}}_{m} \right) + \frac{\mu}{2}, \nonumber\\
&\quad\quad\quad\rho_{3}(\boldsymbol{\theta}_{m}) = -2 \, \text{Re} \left( \left( \mathbf{y}_{m} - \tilde{\bar{\mathbf{y}}}_{m} \right)^\text{H} \tilde{\mathbf{C}}_{m}^{-1} \mathbf{X}_{m,n} \mathbf{X}_{m,n}^\text{H} \tilde{\mathbf{C}}_{m}^{-1} \mathbf{X}_{m,n} \mathbf{X}_{m,n}^\text{H} \tilde{\mathbf{C}}_{m}^{-1} \left( \bar{\mathbf{h}}_{m,n} \otimes \mathbf{s}_n \right) \right) \nonumber\\
&\quad\quad\quad\quad\quad\quad\quad - \left( \bar{\mathbf{h}}_{m,n} \otimes \mathbf{s}_n \right)^\text{H} \tilde{\mathbf{C}}_{m}^{-1} \mathbf{X}_{m,n} \mathbf{X}_{m,n}^\text{H} \tilde{\mathbf{C}}_{m}^{-1} \left( \bar{\mathbf{h}}_{m,n} \otimes \mathbf{s}_n \right), \nonumber\\
&\quad\quad\quad\rho_{4}(\boldsymbol{\theta}_{m}) = \left( \bar{\mathbf{h}}_{m,n} \otimes \mathbf{s}_n \right)^\text{H} \tilde{\mathbf{C}}_{m}^{-1} \mathbf{X}_{m,n} \mathbf{X}_{m,n}^\text{H} \tilde{\mathbf{C}}_{m}^{-1} \mathbf{X}_{m,n} \mathbf{X}_{m,n}^\text{H} \tilde{\mathbf{C}}_{m}^{-1} \left( \bar{\mathbf{h}}_{m,n} \otimes \mathbf{s}_n \right).
\end{align}
\vspace{-5pt}
\hrulefill
\end{figure*}

\subsubsection{Updating the dual variables $\{\boldsymbol{\lambda}_{m}\}_{m=1}^{M}$} After updating  $\{\boldsymbol{\theta}_{m}\}_{m=1}^M$, $\{\boldsymbol{\lambda}_{m}\}_{m=1}^{M}$ are updated by a dual ascent step:
\begin{align}
\label{dual ascent step}
\boldsymbol{\lambda}_m^{(i)} = \boldsymbol{\lambda}_m^{(i-1)} \!+\! \mu \left( \boldsymbol{\theta}_{m}^{(i)} - \mathbf{a}^{(i-1)} \right), \forall~m = 1,2, \dots, M. 
\end{align}

\subsubsection{Subproblem with respect to $\mathbf{a}$}  We decompose the problem with respect to $\mathbf{a}$ into $N$ parallel subproblems, with each written as
\begin{align}
\label{sub a}
\min \limits _{a_n \in \left [{0,1}\right]} \sum _{m=1}^{M} \left\{{\lambda_{m,n}^{(i)}}\left ({\theta_{m,n}^{(i)}\!-\!a_n}\right)+\frac {\mu }{2}\left ({\theta_{m,n}^{(i)}\!-\!a_n}\right)^{2}\right\},
\end{align}
where  $\theta_{m,n}^{(i)}$ denotes the $n$-th entry of $\boldsymbol{\theta}_{m}^{(i)}$. Problem (\ref{sub a}) is a one-dimensional convex quadratic problem, and hence its optimal solution can be derived in a closed form:
\begin{align}
a_{n}^{(i)} = \Pi_{[0,1]} \left( \delta_{n}^{(i)} \right), 
\quad \forall~n = 1,2, \dots, N,
\label{eq:close form}
\end{align}
where $\delta_{n}^{(i)} = \sum_{m=1}^M \left( \mu \theta_{m,n}^{(i)} + \lambda_{m,n}^{(i)}\right)/(M\mu)$, and $\Pi_{[0,1]} (\cdot)$ is the projection operation onto $[0,1]$.

\addtolength{\topmargin}{0.05in}
\begin{algorithm}[H]
\caption{Proposed Distributed Algorithm for Solving Problem (\ref{distributed formulation})}
\label{distributed algorithm}
\begin{algorithmic}[1]
\STATE \textbf{Initialize:}  $ \boldsymbol{\theta}_{m}^{(0)}$, $\boldsymbol{\lambda}_{m}^{(0)}$,  $\forall~m = 1, 2, \dots, M$, and $\mathbf{a}^{(0)}$ using~(\ref{eq:close form});
\STATE \textbf{repeat} ($i = 1, 2,\dots$)
\STATE \quad The CPU broadcasts $\mathbf{a}^{(i-1)}$ to each AP $m$, $\forall~m = 1, 2,\dots, M$;
\STATE  \quad Each AP $m$ updates $\boldsymbol{\theta}_{m}^{(i)}$ by sequentially solving problem (\ref{eq:distributed inexact}) with respect to all coordinates, $\forall~m = 1,2,\dots, M$;
\STATE \quad Each AP $m$ updates $\boldsymbol{\lambda}_m^{(i)}$ by a dual ascent step~(\ref{dual ascent step}), $\forall~m = 1,2, \dots, M$;
\STATE \quad Each AP $m$ sends $\mu \boldsymbol{\theta}_{m}^{(i)} + \boldsymbol{\lambda}_m^{(i)}$ to the CPU, $\forall~m = 1,2, \dots, M$;
\STATE \quad The CPU updates $\mathbf{a}^{(i)}$ with (\ref{eq:close form});
\STATE \textbf{until} convergence
\end{algorithmic}
\end{algorithm}

Through iterative updates of both primal and dual variables, we present a distributed algorithm to solve problem (\ref{distributed formulation}), with the complete procedure outlined in Algorithm~\ref{distributed algorithm}. Although problem (\ref{distributed formulation}) is nonconvex, the following theorem demonstrates that Algorithm~\ref{distributed algorithm} can converge to a stationary point, We omit the proof due to space limitations.

\begin{theorem}\label{Distributed Convergency Theorem}

Let $\tilde{L}_m$ and $L_m$ denote the Lipschitz constant of $\nabla f_m(\boldsymbol{\theta}_{m})$ and $\nabla U_m(\boldsymbol{\theta}_{m})$, respectively, where  $U_m(\boldsymbol{\theta}_m)$ denotes the objective function of problem (\ref{sub b}). There exists a constant $\rho=\bar{\rho}_2+\bar{\rho}_3+\bar{\rho}_4$ such that $\bar{\rho}_2$, $\bar{\rho}_3$, and $\bar{\rho}_4$ are the upper bounds of $\rho_{2}(\boldsymbol{\theta}_{m})$, $\rho_{3}(\boldsymbol{\theta}_{m})$, and $\rho_{4}(\boldsymbol{\theta}_{m})$ in (\ref{eq:distributed inexact}), respectively. When $\omega \ge L_{m} + \rho$, $\omega+\rho \ge 1$, and $\mu>2\tilde{L}_{m}$, the solution sequence  $\{\mathbf{a}^{(i)}\}$ generated by Algorithm \ref{distributed algorithm}  converges to a stationary point of problem (\ref{distributed formulation}).
\end{theorem}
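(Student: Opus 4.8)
The plan is to follow the standard descent-based framework for nonconvex ADMM, adapted to the inexact $\boldsymbol{\theta}_m$-update induced by the Taylor expansions $(a)$ and $(b)$ in \eqref{eq:term1} and \eqref{eq:term2}. The three ingredients I would assemble are a sufficient-decrease property of the augmented Lagrangian \eqref{lagrangian}, a lower bound on its value along the iterates, and a limiting argument on the optimality conditions. The starting point is the identity obtained by combining the first-order optimality condition of the $\boldsymbol{\theta}_m$-subproblem \eqref{sub b} with the dual ascent step \eqref{dual ascent step}: for the exact minimizer this yields $\boldsymbol{\lambda}_m^{(i)} = -\nabla f_m(\boldsymbol{\theta}_m^{(i)})$ (and, for the inexact coordinate-descent iterate, the same relation up to a residual controlled below). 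By the $L_m$-Lipschitz continuity of $\nabla f_m$ on the compact box $[0,1]^N$, this gives the crucial dual-to-primal bound $\|\boldsymbol{\lambda}_m^{(i)} - \boldsymbol{\lambda}_m^{(i-1)}\| \le L_m\|\boldsymbol{\theta}_m^{(i)} - \boldsymbol{\theta}_m^{(i-1)}\|$. The gradient $\nabla f_m$ is indeed Lipschitz there because $\tilde{\mathbf{C}}_m \succeq \varsigma_m^2 \mathbf{I}_{LK} \succ 0$ makes $f_m$ smooth on $[0,1]^N$.

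Next I would telescope the change of $\mathcal{L}$ over the three blocks in one iteration. The augmented term makes the $\boldsymbol{\theta}_m$-subproblem $(\mu - L_m)$-strongly convex once $\mu > L_m$, and the inexact coordinate-descent minimization of the quartic surrogate \eqref{eq:distributed inexact} still produces a genuine decrease of the true augmented Lagrangian provided the extra margin $\mu > L_m + \rho$ is available, with $\rho = \bar{\rho}_2 + \bar{\rho}_3 + \bar{\rho}_4$ absorbing the curvature mismatch between the quartic surrogate and the exact objective. The $\mathbf{a}$-block, being a strongly convex quadratic solved exactly in \eqref{eq:close form}, contributes a nonpositive change, whereas the dual ascent step raises $\mathcal{L}$ by exactly $\tfrac{1}{\mu}\sum_m \|\boldsymbol{\lambda}_m^{(i)} - \boldsymbol{\lambda}_m^{(i-1)}\|^2$, which the dual-to-primal bound controls by $\tfrac{L_m^2}{\mu}\sum_m\|\boldsymbol{\theta}_m^{(i)} - \boldsymbol{\theta}_m^{(i-1)}\|^2$. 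Requiring the net per-iteration change to be strictly negative forces $\mu > 2L_m$, so the two requirements together reproduce the stated threshold $\mu > \max\{2L_m, L_m + \rho\}$ and yield $\mathcal{L}^{(i)} \le \mathcal{L}^{(i-1)} - c\sum_m\|\boldsymbol{\theta}_m^{(i)} - \boldsymbol{\theta}_m^{(i-1)}\|^2$ for some $c > 0$.

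I would then lower-bound $\mathcal{L}^{(i)}$: substituting $\boldsymbol{\lambda}_m^{(i)} = -\nabla f_m(\boldsymbol{\theta}_m^{(i)})$ and applying the descent lemma gives $\mathcal{L}^{(i)} \ge \sum_m\big[f_m(\mathbf{a}^{(i)}) + \tfrac{\mu - L_m}{2}\|\boldsymbol{\theta}_m^{(i)} - \mathbf{a}^{(i)}\|^2\big] \ge \sum_m \min_{\mathbf{a}\in[0,1]^N} f_m(\mathbf{a}) > -\infty$, the last bound following from continuity of $f_m$ on the compact domain (the inexactness residual adds only a term that is controllable as the residual vanishes). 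A monotone sequence bounded below converges, so the sufficient-decrease inequality forces $\|\boldsymbol{\theta}_m^{(i)} - \boldsymbol{\theta}_m^{(i-1)}\| \to 0$, and then the dual-to-primal bound gives $\|\boldsymbol{\lambda}_m^{(i)} - \boldsymbol{\lambda}_m^{(i-1)}\| \to 0$ and the primal residual $\boldsymbol{\theta}_m^{(i)} - \mathbf{a}^{(i-1)} = \tfrac{1}{\mu}(\boldsymbol{\lambda}_m^{(i)} - \boldsymbol{\lambda}_m^{(i-1)}) \to 0$, restoring consensus feasibility in the limit. Passing to the limit in the $\boldsymbol{\theta}_m$-stationarity, the $\mathbf{a}$-update, and the constraint along a convergent subsequence (which exists by compactness) shows every limit point satisfies the KKT system of \eqref{distributed formulation}, i.e., is a stationary point; upgrading this to convergence of the whole sequence $\{\mathbf{a}^{(i)}\}$ follows from the Kurdyka--\L{}ojasiewicz property of $\mathcal{L}$, which holds because it is real-analytic on the region $\{\tilde{\mathbf{C}}_m \succ 0\}$.

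The main obstacle I anticipate is handling the inexactness of the $\boldsymbol{\theta}_m$-update rigorously. Because the coordinate descent minimizes the quartic surrogate \eqref{eq:distributed inexact} built from the first-order Taylor truncations rather than the exact subproblem \eqref{sub b}, the clean identity $\boldsymbol{\lambda}_m^{(i)} = -\nabla f_m(\boldsymbol{\theta}_m^{(i)})$ acquires a residual error, and one must show this error is uniformly bounded by $\rho$ on the compact domain and then verify that the margin $\mu > L_m + \rho$ is precisely what absorbs it into the descent estimate. Establishing the uniform upper bounds $\bar{\rho}_2, \bar{\rho}_3, \bar{\rho}_4$ --- using $\tilde{\mathbf{C}}_m^{-1} \preceq \varsigma_m^{-2}\mathbf{I}_{LK}$ together with the boundedness of $\mathbf{X}_{m,n}$ and $\bar{\mathbf{h}}_{m,n}$ on $[0,1]^N$ --- is routine but is the linchpin that makes the two-sided condition on $\mu$ both meaningful and sufficient for the argument to close.
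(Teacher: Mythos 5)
The paper offers no proof of Theorem~\ref{Distributed Convergency Theorem} to compare against --- it states explicitly that the proof is omitted for space --- so your proposal can only be judged on its own terms. On those terms it is the right skeleton: the standard nonconvex-ADMM descent framework (sufficient decrease of the augmented Lagrangian~(\ref{lagrangian}), lower boundedness via the identity $\boldsymbol{\lambda}_m^{(i)}=-\nabla f_m(\boldsymbol{\theta}_m^{(i)})$ plus the descent lemma, vanishing successive differences, KKT points in the limit, and the Kurdyka--\L{}ojasiewicz property for whole-sequence convergence) is exactly the architecture this block structure calls for. It is also telling that both thresholds in the statement fall out of your estimates for separate reasons: $\mu>2L_m$ from balancing the dual-ascent increase $\tfrac{L_m^2}{\mu}\|\boldsymbol{\theta}_m^{(i)}-\boldsymbol{\theta}_m^{(i-1)}\|^2$ against the primal decrease $\tfrac{\mu-L_m}{2}\|\boldsymbol{\theta}_m^{(i)}-\boldsymbol{\theta}_m^{(i-1)}\|^2$ (indeed $(\mu-L_m)/2>L_m^2/\mu \Leftrightarrow \mu>2L_m$), and $\mu>L_m+\rho$ from absorbing the surrogate mismatch, with $\rho=\bar{\rho}_2+\bar{\rho}_3+\bar{\rho}_4$ appearing in precisely the role the theorem assigns it. This is strong evidence you have reconstructed the intended argument.

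Two points in your outline still need real work before it closes, and both sit at the load-bearing step (the dual-to-primal bound). First, the $\boldsymbol{\theta}_m$-update in Algorithm~\ref{distributed algorithm} is not an exact minimizer of the surrogate, let alone of~(\ref{sub b}): it is a \emph{single sequential sweep} of coordinate updates of~(\ref{eq:distributed inexact}), and each coordinate's optimality holds only at the moment that coordinate is updated, not jointly at the end of the sweep. So the identity $\boldsymbol{\lambda}_m^{(i)}=-\nabla f_m(\boldsymbol{\theta}_m^{(i)})$ fails for two distinct reasons --- the Taylor truncation, which you flag as the main obstacle, and the one-pass coordinate-wise nature of the update, which you do not. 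You need a residual form of the bound, $\|\boldsymbol{\lambda}_m^{(i)}-\boldsymbol{\lambda}_m^{(i-1)}\|\le L_m\|\boldsymbol{\theta}_m^{(i)}-\boldsymbol{\theta}_m^{(i-1)}\|+r_m^{(i)}$, with $r_m^{(i)}$ shown to be dominated by a constant multiple of $\|\boldsymbol{\theta}_m^{(i)}-\boldsymbol{\theta}_m^{(i-1)}\|$ (this is where $\rho$ must enter, since the gradient error of the quartic surrogate on $|d|\le 1$ is controlled by $\bar{\rho}_2,\bar{\rho}_3,\bar{\rho}_4$); otherwise the $\mu>2L_m$ balance does not survive. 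Second, the coordinate updates are constrained to $d\in[-\theta_{m,n},\,1-\theta_{m,n}]$, i.e., $\boldsymbol{\theta}_m\in[0,1]^N$, so every stationarity condition carries a normal-cone term: the expression for $\boldsymbol{\lambda}_m^{(i)}$, the lower-bound substitution $\mathcal{L}^{(i)}\ge\sum_m[f_m(\mathbf{a}^{(i)})+\tfrac{\mu-L_m}{2}\|\boldsymbol{\theta}_m^{(i)}-\mathbf{a}^{(i)}\|^2]$, and the limiting KKT argument must all be rewritten with projected gradients or normal cones rather than plain gradients. Neither issue changes the architecture of your proof, but both must be resolved explicitly for the stated threshold $\mu>\max\{2L_m,\,L_m+\rho\}$ to be provably sufficient.
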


\section{Performance Analysis}
In this section, we analyze how the hybrid near-far field channels affect the detection performance. To achieve good detection performance, the true activity indicator vector $\mathbf{a}^{\circ}$ should be uniquely identifiable. Specifically, there should not exist another vector $\tilde{\mathbf{a}} \neq \mathbf{a}^{\circ}$ that yields the same PDF, i.e., $p(\{\mathbf{y}_m\}_{m=1}^M ; \tilde{\mathbf{a}}) = p(\{\mathbf{y}_m\}_{m=1}^M ; \mathbf{a}^{\circ}).$ Given that both distributions are multivariate Gaussian, this equivalence requires identical means and covariance matrices.  According to the second equation of (\ref{expectation and covariance}), when the covariance matrices are identical, we have 
\begin{equation}
    \sum_{n \in \mathcal{U}_m} \!(\tilde{a}_n\!-\!a_n^{\circ}) \mathbf{R}_{m,n} \otimes (\mathbf{s}_n \mathbf{s}_n^\text{H})+\!\!\!\sum_{n \in \mathcal{U}_m^\text{c}}\!(\tilde{a}_n\!-\!a_n^{\circ}) g_{m,n} \mathbf{I}_K \otimes (\mathbf{s}_n \mathbf{s}_n^\text{H})\!=\!\mathbf{O},
\end{equation}
where $\tilde{a}_n$ and $a_n^{\circ}$ denote the $n$-th entry of $\tilde{\mathbf{a}}$ and $\mathbf{a}^{\circ}$, respectively, and  $\mathbf{O}$ denotes the zero matrix. After vectorization and using the definition of $\mathbf{X}_{m,n}$ in (\ref{defineX_mn}), we have
\begin{equation}
 \boldsymbol{\Psi}_m\boldsymbol{\xi}\! =\! \mathbf{0},
\end{equation}
where $\boldsymbol{\xi} = \tilde{\mathbf{a}} - \mathbf{a}^{\circ}$ and
\begin{align}
    \boldsymbol{\Psi}_m\!\!= \![\text{vec}(\mathbf{X}_{m,1}\mathbf{X}_{m,1}^\text{H}\!),\text{vec}(\mathbf{X}_{m,2}\mathbf{X}_{m,2}^\text{H}),...,\!\!\text{vec}(\mathbf{X}_{m,N}\mathbf{X}_{m,N}^\text{H})].
\end{align}
Let
\begin{align}
    \mathcal{V} &= \{\boldsymbol{\xi} \in \mathbb{R}^N | \boldsymbol{\Psi}_m\boldsymbol{\xi} = \mathbf{0}, \boldsymbol{\xi} \neq \mathbf{0}, \forall~m =1,2,...,M\}, \label{identify 1}\\
    \mathcal{D} &= \{\boldsymbol{\xi} \in \mathbb{R}^N | \xi_n \geq 0 \text{ if } a_n^{\circ} = 0, \xi_n \leq 0 \text{ if } a_n^{\circ} = 1\}\label{feasible set},
\end{align}
where  (\ref{identify 1}) defines the solution set containing any other $\tilde{\mathbf{a}}\neq \mathbf{a}^{\circ}$ that yields the same covariance matrix, while (\ref{feasible set}) defines the feasible set of problem (\ref{eq:optimization_problem}). Consequently, the identifiability of $\mathbf{a}^{\circ}$  requires at least \(\mathcal{V}\cap \mathcal{D} = \emptyset \). From (\ref{identify 1}), we can observe that  if the columns of  $\boldsymbol{\Psi}_m$ are closer to orthogonality, then the condition $\mathcal{V} \cap \mathcal{D} = \emptyset$ is more likely to hold, which ensures better detection performance. The following proposition demonstrates that the columns of  $\boldsymbol{\Psi}_m$  in the hybrid near-far field channels are closer to orthogonality than those of the far-field channels.

\begin{proposition}\label{prop:mutual_similarity}
For any two devices $n$ and $n'$, the columns of $\boldsymbol{\Psi}_m$ in the hybrid near-far field channels are closer to \mbox{orthogonality} in the sense that their cosine similarity is upper bounded by that of the far-field channels:
\begin{align}
\frac{(\boldsymbol{\psi}_{m,n})^{\text{H}} \boldsymbol{\psi}_{m,n'}}{\|\boldsymbol{\psi}_{m,n}\|_2 \|\boldsymbol{\psi}_{m,n'}\|_2} \leq \left(\frac{|\mathbf{s}_n^{\text{H}} \mathbf{s}_{n'}|}{\|\mathbf{s}_n\|_2 \|\mathbf{s}_{n'}\|_2}\right)^2,
\end{align}
where $\boldsymbol{\psi}_{m,n}$ is the $n$-th column of $\boldsymbol{\Psi}_m$, and  the equality holds if and only if $\mathbf{\Xi}_{m,n} = c\mathbf{\Xi}_{m,n'}$ for some constant $c$.
\end{proposition}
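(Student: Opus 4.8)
The plan is to reduce the cosine similarity of the columns $\boldsymbol{\psi}_{m,n}$ to a product of two factors: a term depending only on the signature sequences $\mathbf{s}_n,\mathbf{s}_{n'}$ and a term depending only on the covariance matrices $\mathbf{\Xi}_{m,n},\mathbf{\Xi}_{m,n'}$. The right-hand side of the claimed inequality is exactly the (squared) cosine similarity of the signatures, so the whole statement will follow once I show that the covariance factor is bounded by one, with the far-field case being the instance where it equals one.

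First I would recall from (\ref{defineX_mn}) that $\boldsymbol{\psi}_{m,n}=\operatorname{vec}(\mathbf{X}_{m,n}\mathbf{X}_{m,n}^{\text{H}})=\operatorname{vec}\bigl(\mathbf{\Xi}_{m,n}\otimes(\mathbf{s}_n\mathbf{s}_n^{\text{H}})\bigr)$, and that both $\mathbf{\Xi}_{m,n}$ and $\mathbf{s}_n\mathbf{s}_n^{\text{H}}$ are Hermitian and positive semidefinite. Applying the identity $\operatorname{vec}(\mathbf{A})^{\text{H}}\operatorname{vec}(\mathbf{B})=\operatorname{tr}(\mathbf{A}^{\text{H}}\mathbf{B})$, the Kronecker mixed-product rule $(\mathbf{A}\otimes\mathbf{B})(\mathbf{C}\otimes\mathbf{D})=(\mathbf{A}\mathbf{C})\otimes(\mathbf{B}\mathbf{D})$, and $\operatorname{tr}(\mathbf{A}\otimes\mathbf{B})=\operatorname{tr}(\mathbf{A})\operatorname{tr}(\mathbf{B})$, together with $\operatorname{tr}(\mathbf{s}_n\mathbf{s}_n^{\text{H}}\mathbf{s}_{n'}\mathbf{s}_{n'}^{\text{H}})=|\mathbf{s}_n^{\text{H}}\mathbf{s}_{n'}|^{2}$, I obtain the closed form
\[
(\boldsymbol{\psi}_{m,n})^{\text{H}}\boldsymbol{\psi}_{m,n'} = \operatorname{tr}(\mathbf{\Xi}_{m,n}\mathbf{\Xi}_{m,n'})\,|\mathbf{s}_n^{\text{H}}\mathbf{s}_{n'}|^{2}.
\]
Specializing to $n'=n$ gives $\|\boldsymbol{\psi}_{m,n}\|_2^{2}=\operatorname{tr}(\mathbf{\Xi}_{m,n}^{2})\,\|\mathbf{s}_n\|_2^{4}$, so the cosine similarity factors as
\begin{align*}
\frac{(\boldsymbol{\psi}_{m,n})^{\text{H}}\boldsymbol{\psi}_{m,n'}}{\|\boldsymbol{\psi}_{m,n}\|_2\|\boldsymbol{\psi}_{m,n'}\|_2}
&=\frac{\operatorname{tr}(\mathbf{\Xi}_{m,n}\mathbf{\Xi}_{m,n'})}{\sqrt{\operatorname{tr}(\mathbf{\Xi}_{m,n}^{2})\operatorname{tr}(\mathbf{\Xi}_{m,n'}^{2})}}\\
&\quad\times\left(\frac{|\mathbf{s}_n^{\text{H}}\mathbf{s}_{n'}|}{\|\mathbf{s}_n\|_2\|\mathbf{s}_{n'}\|_2}\right)^{2}.
\end{align*}

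Next I would bound the first factor. Since $\mathbf{\Xi}_{m,n}$ and $\mathbf{\Xi}_{m,n'}$ are positive semidefinite, the Frobenius inner product $\langle\mathbf{\Xi}_{m,n},\mathbf{\Xi}_{m,n'}\rangle=\operatorname{tr}(\mathbf{\Xi}_{m,n}\mathbf{\Xi}_{m,n'})$ is nonnegative, and the Cauchy–Schwarz inequality gives $\operatorname{tr}(\mathbf{\Xi}_{m,n}\mathbf{\Xi}_{m,n'})\le\sqrt{\operatorname{tr}(\mathbf{\Xi}_{m,n}^{2})\operatorname{tr}(\mathbf{\Xi}_{m,n'}^{2})}$, so the covariance factor lies in $[0,1]$. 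This immediately yields the stated upper bound, and the far-field case (where $\mathbf{\Xi}_{m,n}=g_{m,n}\mathbf{I}_K$ and $\mathbf{\Xi}_{m,n'}=g_{m,n'}\mathbf{I}_K$ are proportional) makes this factor equal to one, recovering the signature-only cosine similarity on the right-hand side. The tightness claim then follows from the equality condition of Cauchy–Schwarz: the covariance factor equals one if and only if $\mathbf{\Xi}_{m,n}$ and $\mathbf{\Xi}_{m,n'}$ are linearly dependent, i.e., $\mathbf{\Xi}_{m,n}=c\mathbf{\Xi}_{m,n'}$ for some constant $c$.

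The computation is largely routine once the Kronecker identities are invoked, so I do not expect a serious obstacle; the one point requiring care is the nonnegativity of $\operatorname{tr}(\mathbf{\Xi}_{m,n}\mathbf{\Xi}_{m,n'})$. Without it, the factored cosine similarity could be negative and the Cauchy–Schwarz bound would be vacuous or would need an absolute value. Establishing that this trace is nonnegative rests entirely on the positive semidefiniteness of the $\mathbf{\Xi}_{m,n}$, which holds because each is either a scaled identity or a sum of rank-one outer products; this property is precisely what makes the near-field correlation structure improve, rather than degrade, the orthogonality of the columns of $\boldsymbol{\Psi}_m$.
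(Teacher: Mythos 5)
Your proof is correct and follows essentially the same route as the paper's: factor the cosine similarity into the signature term times the ratio $\operatorname{tr}(\mathbf{\Xi}_{m,n}\mathbf{\Xi}_{m,n'})/(\|\mathbf{\Xi}_{m,n}\|_F\|\mathbf{\Xi}_{m,n'}\|_F)$, then bound that ratio by one via Cauchy--Schwarz on the Frobenius inner product, with equality exactly when $\mathbf{\Xi}_{m,n}=c\,\mathbf{\Xi}_{m,n'}$. Your explicit treatment of the Kronecker-product identities and of the nonnegativity of $\operatorname{tr}(\mathbf{\Xi}_{m,n}\mathbf{\Xi}_{m,n'})$ (needed so the bound is not vacuous) only fills in details the paper leaves implicit.
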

\begin{proof}
Please see Appendix~\ref{app:mutual_similarity}.
\end{proof}
The above analysis demonstrates that the hybrid near-far field channels can effectively improve the detection performance as compared to the conventional far-field channels. 

\section{Simulation results}
In this section, we validate the performance of the proposed method through simulations in terms of the PM and the PF~\cite{ganesan2021clustering}. We consider a \(200 \times 200\) square meters area, i.e., $[-100,100] \times [-100, 100]$, with wrap-around at the boundary. There are \(N = 100\) IoT devices uniformly distributed in this square area, where the ratio of the active devices is 0.1. The signature sequence of each device \(\mathbf{s}_n\) is generated randomly and uniformly from the discrete set $\left\{ \pm \frac{\sqrt{2}}{2} \pm j \frac{\sqrt{2}}{2} \right\}^L$ \cite{wang2024scalinglaw}. The number of the APs is set to $M = 3$, which are located at $(40,0)$, $(-20,20\sqrt{3})$, and $(-20,-20\sqrt{3})$, respectively.
 The  channel coefficients $g_{m,n}$, $\beta_{m,n}^2$, and $\tilde{\beta}^2_{m,n,\ell}$ follow the path-loss model, which is given by $128.1 + 37.6 \log_{10}(\tau),$ where $\tau$ is the corresponding distance in km, the number of scatters at each AP is set to $L_m=8$, and the variance of  the reflection coefficient of each scatter is set to $\sigma_{m,l}^2=1$.
The background noise power is set as $-99$ dBm.

\addtolength{\textheight}{-0.05in}

\begin{figure}[t]
	\centering
    	\begin{subfigure}[t]{0.49\columnwidth}
		\includegraphics[width=\textwidth]{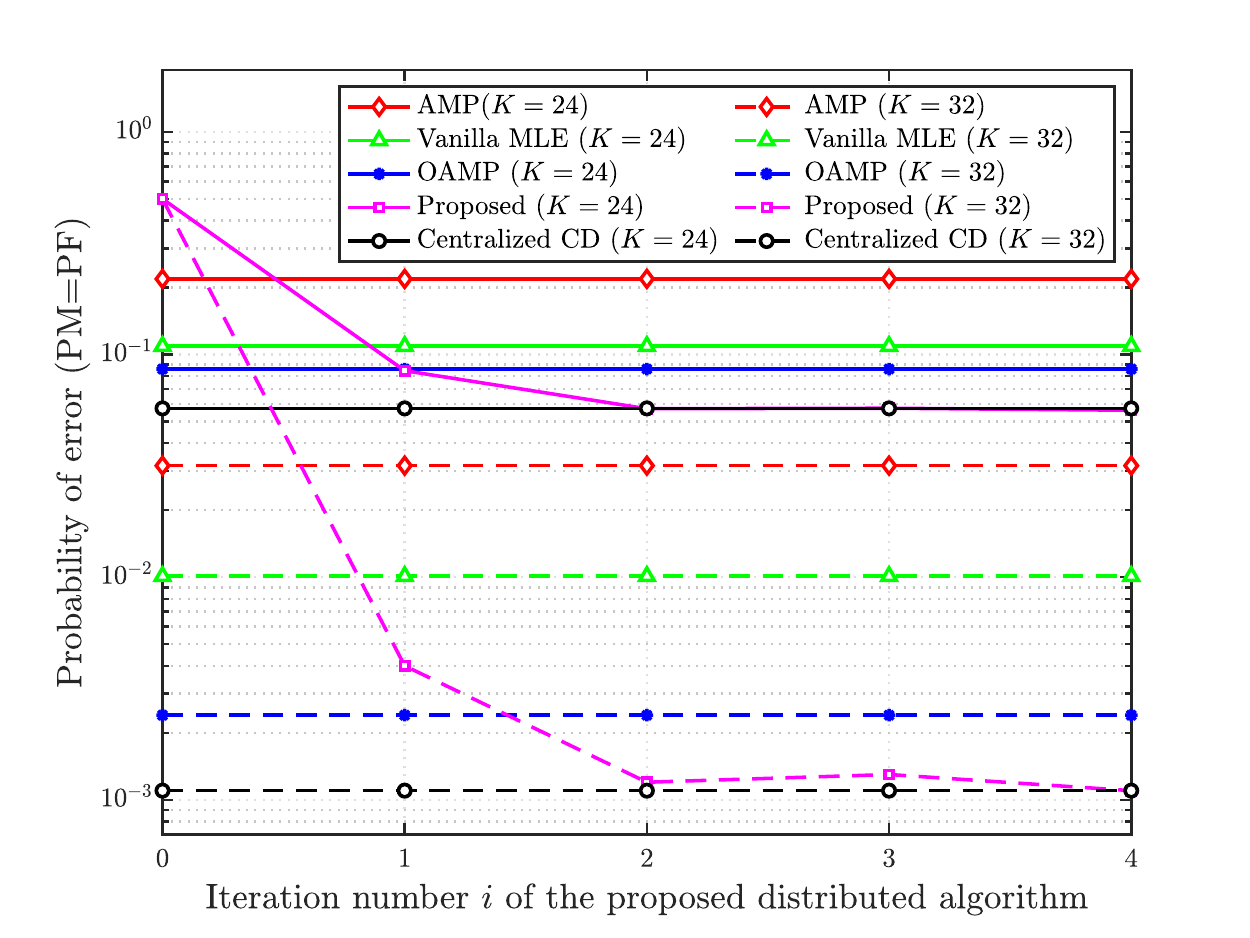}
		\caption{Probability of error versus iteration number.}
		\label{fig iteration}
	\end{subfigure}
	\begin{subfigure}[t]{0.49\columnwidth}
		\includegraphics[width=\textwidth]{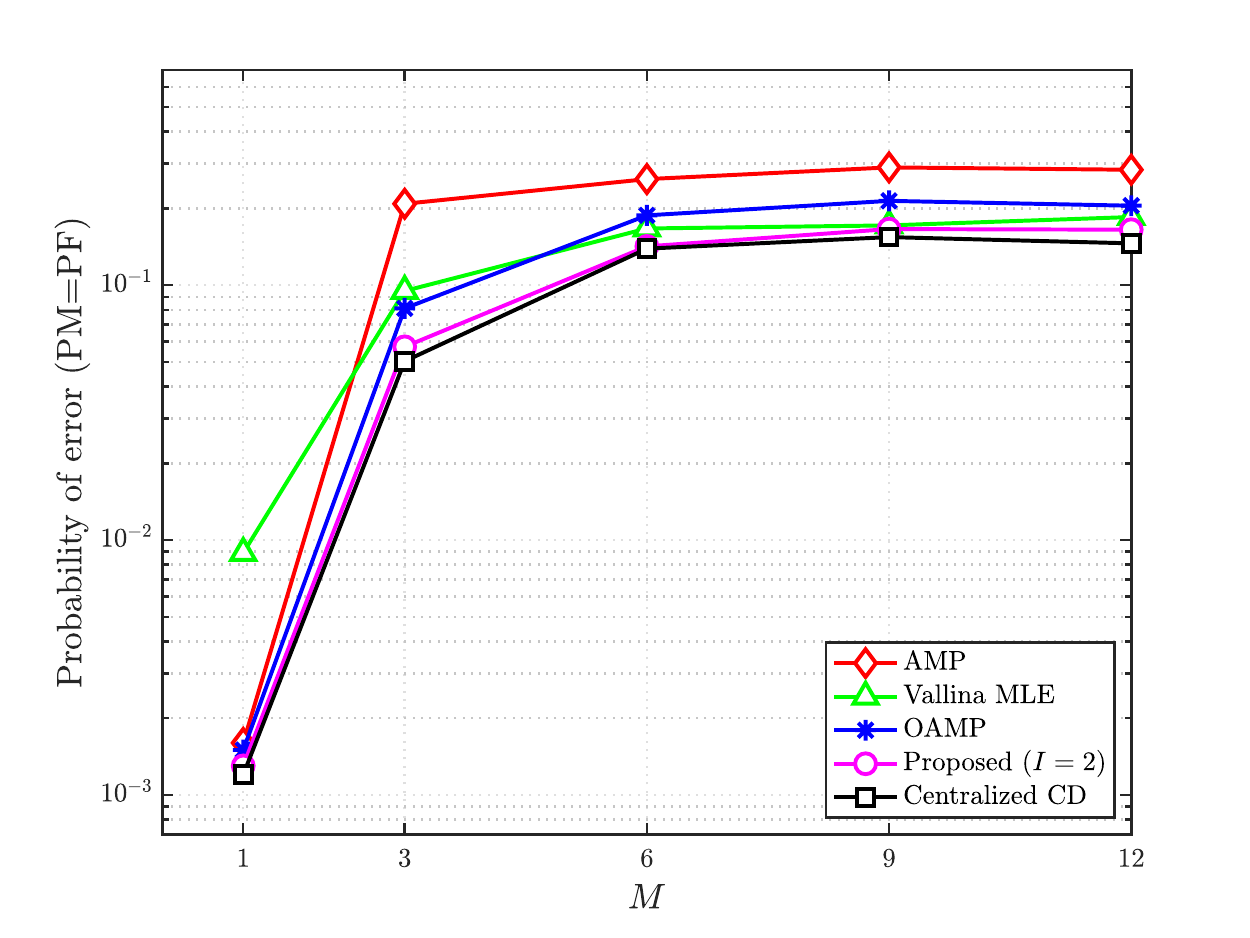}
		\caption{Probability of error versus $M$.}
		\label{prob of error versus number of APs}
	\end{subfigure}
	\caption{Comparison of the proposed distributed algorithm with the benchmarks.}
	\label{performance comparison}
	\vspace{-5pt}
\end{figure}

% \begin{figure}[t]
% 	\centering
% 	\includegraphics[width=0.9\columnwidth]{Figure_benchmark-eps-converted-to.pdf}
% 	\caption{Probability of error versus iteration number.}
% 	\label{fig iteration}
% 	\vspace{-10pt}
% \end{figure}

We compare the performance of the proposed distributed algorithm with the following benchmarks: Centralized CD~\cite{wang2024covariance}, AMP\cite{djelouat2021user}, OAMP\cite{cheng2020orthogonal}, and  Vanilla MLE\cite{liu2024mle}. The probability of error when $\text{PM}=\text{PF}$ is shown in Fig.~\ref{fig iteration}, where $L = 6$, $K=24$ or $32$,  and $D=(K-1)\lambda_\text{c}/2$ with $\lambda_{\text{c}}=0.2$ m.
 We can see that the proposed distributed  Algorithm~\ref{distributed algorithm} achieves a fast convergence speed within only 2 iterations, and the increased  number of the antennas can improve the detection performance of all methods. Moreover, since the proposed algorithm leverages the problem formulation (\ref{eq:optimization_problem}) based on an
 accurate hybrid-field channel model (\ref{expectation and covariance}), it outperforms the exiting centralized approaches, including AMP, OAMP, and Vanilla MLE. Finally, as a distributed algorithm, the proposed Algorithm~1 can achieve comparable performance to Centralized CD. Figure~\ref{prob of error versus number of APs} further demonstrates that when the total number of antennas is fixed as $72$ with signature sequence length $L=6$, the algorithm maintains superior performance as the number of APs varies from $1$ to $12$ with antennas distributed equally among the APs.

 \begin{figure}[t]
	\centering
    	\begin{subfigure}[t]{0.49\columnwidth}
		\includegraphics[width=\textwidth]{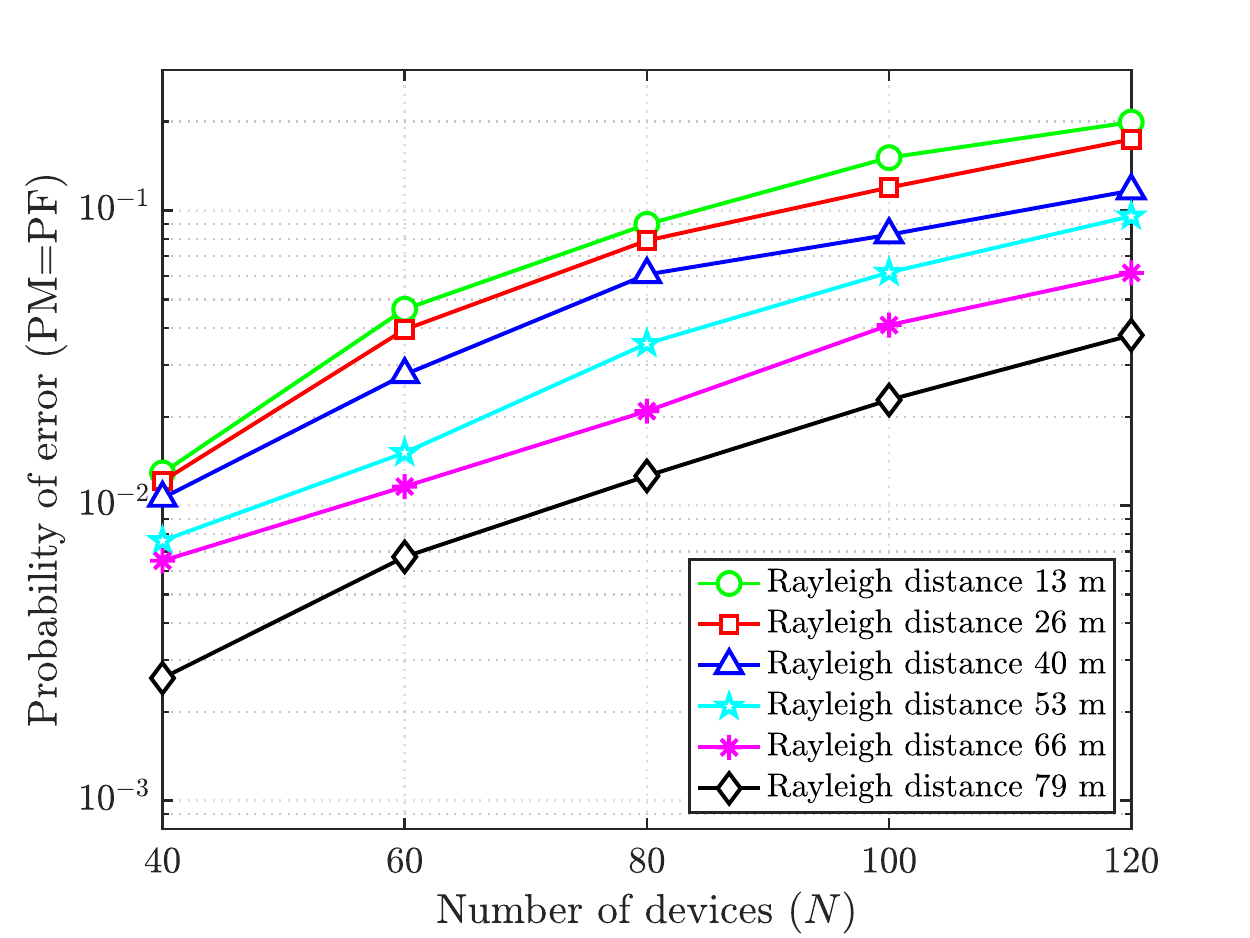}
		\caption{Probability of error versus $N$ when $L=6$.}
		\label{subfig-b}
	\end{subfigure}
	\begin{subfigure}[t]{0.49\columnwidth}
		\includegraphics[width=\textwidth]{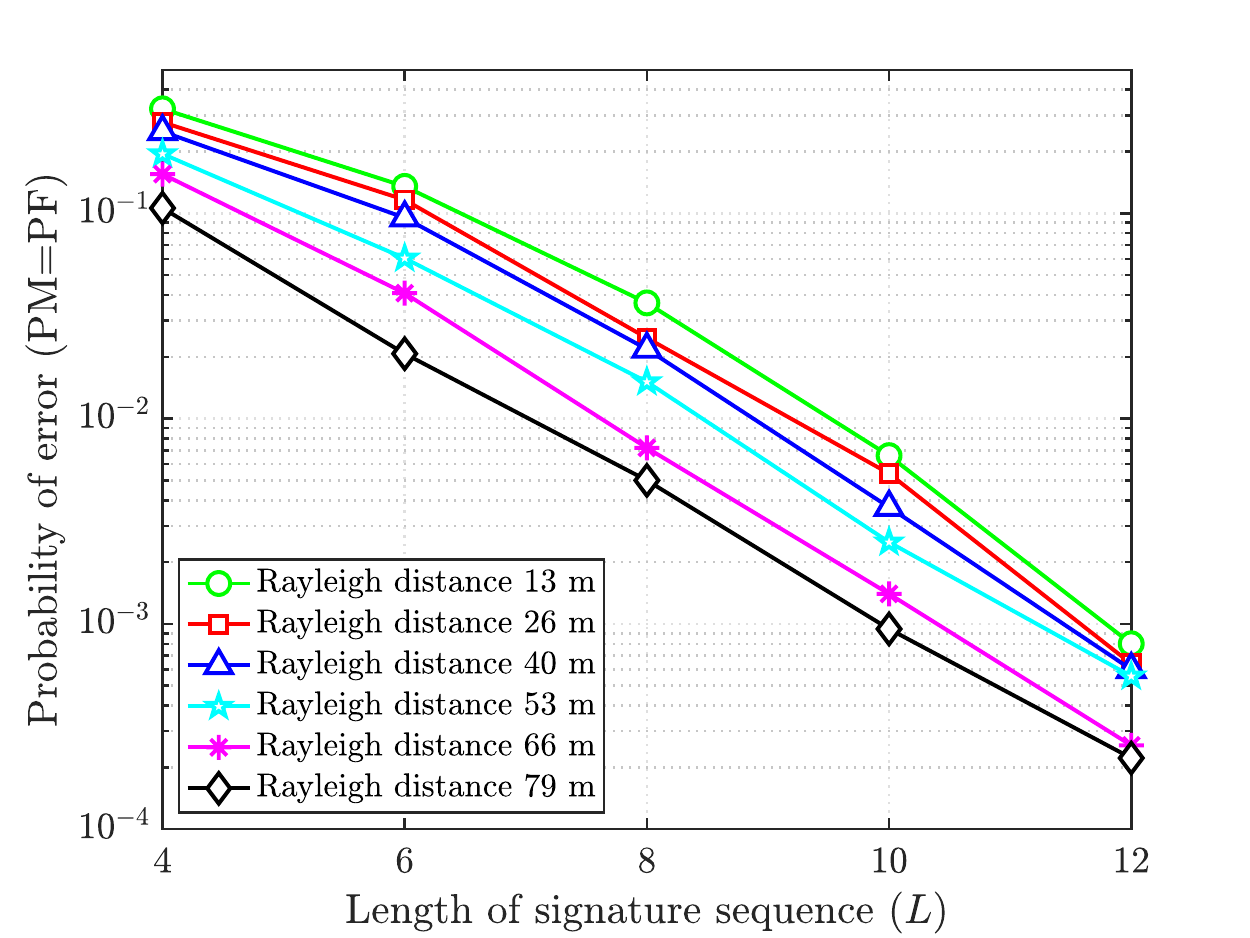}
		\caption{Probability of error versus $L$ when $N=100$.}
		\label{subfig-a}
	\end{subfigure}
	\caption{Performance comparison across different Rayleigh distances.}
	\label{analysis simulation}
	\vspace{-5pt}
\end{figure}

Figure~\ref{analysis simulation} illustrates the detection performance across different Rayleigh distances, where $K=24$, and $D=(K-1)\lambda_\text{c}/2$ with $\lambda_{\text{c}}=0.05$~m, $0.1$ m, $0.15$ m, $0.20$, $0.25$, and $0.3$ m, which correspond to the Rayleigh distances of $13$ m, $26$ m, $40$ m, $53$ m, $66$~m, and $79$~m, respectively. It can be observed that a longer Rayleigh distance, which incorporates more near-field channels, leads to better detection performance under different device numbers and different lengths of signature sequences, verifying the theoretical analysis in Proposition~1.

\section{Conclusion}
This paper investigated  the  hybrid near-far field activity detection in cell-free massive MIMO and established a covariance-based formulation. Then, a distributed activity detection algorithm was proposed to alleviate the computational burden at the CPU. The theoretical analysis demonstrated that the hybrid near-far field channels can improve the detection performance compared with the conventional far-field channels. Simulation results  corroborated the
theoretical analysis and demonstrated that the proposed distributed approach achieves superior detection performance compared to the existing methods.

\appendices

\section{Proof of Proposition~\ref{prop:mutual_similarity}}\label{app:mutual_similarity}
For devices $n$ and $n'$, we have
\begin{align}
\label{prove proposition1}
    \frac{(\boldsymbol{\psi}_{m,n})^{\text{H}} \boldsymbol{\psi}_{m,n'}}{\|\boldsymbol{\psi}_{m,n}\|_2 \|\boldsymbol{\psi}_{m,n'}\|_2}&\! =\! \frac{\text{vec}(\mathbf{X}_{m,n}\mathbf{X}_{m,n}^{\text{H}})^{\text{H}}\text{vec}(\mathbf{X}_{m,n'}\mathbf{X}_{m,n'}^{\text{H}})}{\|\text{vec}(\mathbf{X}_{m,n}\mathbf{X}_{m,n}^{\text{H}})\|_2 \|\text{vec}(\mathbf{X}_{m,n'}\mathbf{X}_{m,n'}^{\text{H}})\|_2} \notag\\
    &= \frac{\text{tr}(\mathbf{\Xi}_{m,n} \mathbf{\Xi}_{m,n'})}{\|\mathbf{\Xi}_{m,n}\|_F \|\mathbf{\Xi}_{m,n'}\|_F} \times \left(\frac{|\mathbf{s}_n^{\text{H}} \mathbf{s}_{n'}|}{\|\mathbf{s}_n\|_2 \|\mathbf{s}_{n'}\|_2}\right)^2 \notag\\
    & \overset{(a)}{\leq} \left(\frac{|\mathbf{s}_n^{\text{H}} \mathbf{s}_{n'}|}{\|\mathbf{s}_n\|_2 \|\mathbf{s}_{n'}\|_2}\right)^2,
\end{align}
where  $(a)$ follows from the Cauchy-Schwartz inequality with the equality holding if and only if $\mathbf{\Xi}_{m,n} = c\mathbf{\Xi}_{m,n'}$ for some constant $c$. The right-hand side of (\ref{prove proposition1}) is the cosine similarity of the far-field channels, which can be verified by substituting $\mathbf{\Xi}_{m,n}=g_{m,n}\mathbf{I}_K$ and  $\mathbf{\Xi}_{m,n'}=g_{m,n'}\mathbf{I}_K$ into the left-hand side of (\ref{prove proposition1}).

\bibliographystyle{IEEEtran}
\bibliography{references}
\end{document}